\tikzset{%
	font={\footnotesize},
	vertex/.style={draw,circle,inner sep=0pt,minimum width=0.5cm,minimum height=0.5cm,font=\small, scale=0.9},
	terminal/.style={draw,regular polygon,regular polygon sides=4,inner sep=0pt,minimum width=0.5cm,minimum height=0.5cm,font=\small, scale=1.0},
	zeroterm/.style={below,inner sep=0pt,font=\scriptsize, scale=0.9}
}
\newtheorem{example}{Example}
\newtheorem{theorem}{Theorem}
\def\BibTeX{{\rm B\kern-.05em{\sc i\kern-.025em b}\kern-.08em
    T\kern-.1667em\lower.7ex\hbox{E}\kern-.125emX}}
\begin{document}

\title{Stochastic Quantum Circuit Simulation\\Using Decision Diagrams} %

\author{\author{
\IEEEauthorblockN{Thomas Grurl\IEEEauthorrefmark{1}\IEEEauthorrefmark{2} \hspace{1cm} Richard Kueng\IEEEauthorrefmark{2} \hspace{1cm} Jürgen Fuß\IEEEauthorrefmark{1} \hspace{1cm} Robert Wille\IEEEauthorrefmark{2}\IEEEauthorrefmark{3}}
\IEEEauthorblockA{\IEEEauthorrefmark{1}Secure Information Systems, University of Applied Sciences Upper Austria, Austria}
\IEEEauthorblockA{\IEEEauthorrefmark{2}Institute for Integrated Circuits, Johannes Kepler University Linz, Austria}
\IEEEauthorblockA{\IEEEauthorrefmark{3}Software Competence Center Hagenberg GmbH (SCCH), Hagenberg, Austria}
\IEEEauthorblockA{\{thomas.grurl, juergen.fuss\}@fh-hagenberg.at \hspace{0.5cm} \{richard.kueng, robert.wille\}@jku.at}
\IEEEauthorblockA{\url{http://iic.jku.at/eda/research/quantum/}}
}
}

\maketitle

\begin{abstract}
Recent years have seen unprecedented advance in the design and control of quantum computers. %
Nonetheless, their applicability is still restricted and access remains expensive. 
Therefore, a substantial amount of quantum algorithms research %
still relies on simulating quantum circuits on classical hardware.
However, due to the sheer complexity of simulating real quantum computers, many simulators unrealistically simplify the problem and instead simulate \emph{perfect} quantum hardware, i.e., they do not consider errors caused by the fragile nature of quantum systems.
\emph{Stochastic quantum simulation} provides a conceptually suitable solution to this problem: physically motivated errors are applied in a probabilistic fashion throughout the simulation.
In this work, we propose to use decision diagrams, as well as concurrent executions, to substantially reduce resource-requirements---which are still daunting---for stochastic quantum circuit simulation. 
Backed up by rigorous theory, %
empirical studies show that this approach allows for 
 a substantially faster and much more scalable simulation for certain quantum circuits.

\end{abstract}

\section{Introduction} 

By utilizing quantum mechanical effects, 
quantum computers promise to solve problems which are intractable for classical computers. Early examples for this are Shor's algorithm~\cite{DBLP:journals/siamcomp/Shor97} for factoring integers or Grover's database search algorithm~\cite{NC:2000}. %
As the research on quantum algorithms gained more and more traction, more quantum algorithms have been found in the areas of chemistry, finance, machine learning, and mathematics\mbox{~\cite{Kassal18681, RebentrostFinance,KerendisQmeans, farhiQAOA}}. 
Alongside the work of quantum software development, there have been unprecedented accomplishments towards the physical realization of quantum hardware. In 2019, Google claimed to have achieved quantum advantage by using a 54-qubit processor to calculate a task in 200 seconds for which they estimate a \mbox{state-of-the-art} supercomputer would require ten thousand years~\cite{Arute2019a}. In the same year, IBM released its \mbox{53-qubit} quantum computer and made it available for commercial use~\cite{ibm53qubit}.
And, recently, IBM announced their roadmap towards launching a \mbox{1212-qubit} processor in 2023~\cite{ibm1023qubitProcessor2020}.

However, recent breakthroughs notwithstanding, quantum computers are still an emerging technology and current quantum processors are limited in reliability and availability. 
Consequently, a considerable amount of research on quantum algorithms still relies on simulating quantum circuits on classical hardware. 
This is an exponentially hard task, almost by definition.
To make matters worse, classical simulation of a \emph{perfect} quantum circuit is, arguably, beside the point. Today's quantum architectures are plagued by frequent errors that are unavoidable given the fragile nature of quantum systems~\cite{Preskill2018quantumcomputingin}. 
Although error mitigation is constantly improving, they are still a dominating factor in quantum computing. 
Therefore, taking those errors into account when simulating quantum circuits is essential in order to understand how an algorithm behaves when executed on real quantum hardware. 

As a theory, quantum mechanics is capable of describing these types of errors and their effect (namely quantum channels and mixed states~\cite{NC:2000}).
However, this general formalism renders
the exponentially hard problem of quantum circuit simulation even harder---accordingly limiting corresponding simulation approaches (see, e.g.,~\cite{forest,qiskit,atos2016,qxSimulator2017,DBLP:journals/corr/WeckerS14,CirqPythonFramework,jones2018quest,DBLP:journals/corr/SmelyanskiySA16,villalonga2019highperformancesimulator,DBLP:conf/iccad/GrurlFW20}).
This is why many quantum circuit simulators simplify the problem and only mimic \emph{perfect} (i.e., error-free) quantum computers (e.g.,~\mbox{\cite{vidal2003efficient,DBLP:journals/tcad/ZulehnerW19,MTG:2006,DBLP:journals/tcad/NiemannWMTD16,Steiger2018projectqopensource}}).

In this work, we consider an alternative approach which avoids making a hard problem unnecessarily harder.  %
Instead, we consider a stochastic error model. That is, we assume that errors occur randomly throughout individual simulation runs. 
Afterwards, we approximate the true effect of errors on a quantum computation by forming empirical averages over multiple simulation runs (Monte Carlo).
This provides a conceptually suitable and mathematically rigorous solution for classical simulation of noisy quantum computations which can easily be implemented on top of existing simulators, such as~\cite{atos2016,qiskit,forest}.

Nonetheless, severe challenges remain. First and foremost, there is the curse of dimensionality: performing a single simulation run requires repeated matrix-vector multiplications of exponential size to appropriately track the effect of quantum operations. To make matters worse, a single stochastic simulation run does not capture error effects appropriately. Instead, a sufficiently large number of independent simulation runs must be conducted to form empirical averages that accurately reflect the true quantum evolution. 
Both factors combined render
 existing solutions severely limited with respect to efficiency and scalability.

In order to overcome these limitations, we propose a solution which (1)~uses decision diagrams to represent states and operations in a more compact fashion and (2)~conducts concurrent executions to accelerate the generation of samples. Evaluations and comparisons to state-of-the-art simulators by IBM and Atos confirm the viability of the proposed solution. %
In fact, for certain circuits, we were able to conduct the respective simulations in a more scalable fashion (i.e,. considering substantially more qubits than before) and a much more efficient fashion (often, several orders of magnitudes faster). %

Our contributions are described in the rest of this paper as follows: Section~\ref{sec:background} reviews quantum computing and the errors that might occur. Section~\ref{sec:stochastic_simulation} discusses how those errors can be simulated using a stochastic approach. In Section~\ref{sec:prob_soluation}, we outline the concept of the proposed solution, which we then evaluate in Section~\ref{sec:evaluation} against two \mbox{state-of-the-art} simulators. Finally, Section~\ref{sec:conclusion} concludes the paper.

\section{Background}
\label{sec:background}

In order to keep this work self-contained, this section reviews the basic concepts of quantum computing as well as error effects. We refer the interested reader to standard textbooks, e.g.,~\cite{NC:2000,watrous_2018}, for a more thorough introduction.

\subsection{Quantum Computing}
\label{subsec:quantum_computing}
In the classical world, the basic unit of information is a bit, which can either assume the state 0 or 1. In the quantum world, the smallest unit of information is called a \emph{quantum bit} or \emph{qubit}. Like a classical bit, a qubit can assume the states 0 and 1, which are called \emph{basis states} and---using Dirac notation---are written as $\ket{0}$ and $\ket{1}$. Additionally, a qubit can also assume an almost arbitrary combination of the two basis states, which is then called a \emph{superposition}. More precisely, the state of the qubit $\ket{\psi}$ is written as $\ket{\psi} = \alpha_0 \cdot \ket{0} + \alpha_1 \cdot \ket{1}$ with $\alpha_0, \alpha_1 \in \mathbb{C}$ such that \mbox{$\abs{\alpha_0}^2 + \abs{\alpha_1}^2 = 1$}. The values $\alpha_0, \alpha_1$ are called \emph{amplitudes} and describe how strongly the qubit is related to each of the basis states. Measuring a qubit yields $\ket{0}$ ($\ket{1}$) with probability $\abs{\alpha_0}^2$ ($\abs{\alpha_1}^2$). By measuring the qubit, any existing superposition is destroyed and the state of the qubit collapses to the measured basis state.

Quantum states containing more than one qubit are often called \emph{quantum registers} and the concepts above can be extended to describe such systems as well. An \mbox{$n$~qubit} register can assume $N = 2^{n}$ basis states and is described by $N$ amplitudes $\alpha_0, \alpha_1, \dots \alpha_{N-1}$, which must satisfy the normalization constraint $\sum_{i\in \{0,1\}^n} |\alpha_i|^2 = 1$. 
Usually quantum states are shortened to state vectors containing only the amplitudes, e.g., 
$ \begin{bNiceMatrix}[small] \alpha_{00}&\alpha_{01}&\alpha_{10}&\alpha_{11}\end{bNiceMatrix}^\top$ for $n=2$ qubits.  %

\begin{example}
\label{exp:state_vec_rep}
Consider the 2-qubit quantum register
\vspace*{-1mm}
\[
\ket{\psi} = \frac{1}{\sqrt{2}} \cdot \ket{00} + 0 \cdot \ket{01} + \frac{1}{\sqrt{2}} \cdot \ket{10} + 0 \cdot \ket{11},
\]
\vspace*{-1mm}
which is represented by the state vector $\begin{bNiceMatrix}[small] \frac{1}{\sqrt{2}}&0&\frac{1}{\sqrt{2}}&0\end{bNiceMatrix}^\top$.
This is a valid state, since it satisfies the normalization constraint
$\abs{{\frac{1}{\sqrt{2}}}}^2 + 0^2 + \abs{{\frac{1}{\sqrt{2}}}}^2 + 0^2 = 1$.
Measuring the system yields either $ \ket{00} $ or $ \ket{10} $---both with probability $\abs{{\frac{1}{\sqrt{2}}}}^2= {\frac{1}{2}}$. 
Note that the leftmost qubit is in a superposition and equally strongly related to $\ket{0}$ and $\ket{1}$, while the other qubit is in the basis state $\ket{0}$.
\end{example}

Quantum states can be manipulated using quantum operations. With the exception of the measurement operation, all quantum operations are inherently reversible and represented by unitary matrices, i.e., square matrices whose inverse is their conjugate transpose. Important 1-qubit operations are \mbox{$\textup{H}=\sfrac{1}{\sqrt{2}}\begin{bNiceMatrix}[r][small]1&1\\1&-1\end{bNiceMatrix}$} (transforming a basis state into a superposition), 
$\textup{X}=\begin{bNiceMatrix}[r][small]0&1\\1&0\end{bNiceMatrix}$ (the quantum equivalent of the NOT operation), 
$\textup{Z}=\begin{bNiceMatrix}[r][small]1&0\\0&-1\end{bNiceMatrix}$ (flipping the phase of a qubit), as well as the combination of both $Y=iXZ=\begin{bNiceMatrix}[r][small]0&-i\\i&0\end{bNiceMatrix}$.
Another ``operation'' is the identity operation given by $\textup{I}=\begin{bNiceMatrix}[r][small]1&0\\0&1\end{bNiceMatrix}$. It simply leaves a state unchanged and is relevant in the context of simulating errors. There are also 2-qubit operations. An important example is the controlled-X (also known as CNOT) operation, which negates the state of a qubit, if the chosen control qubit is $\ket{1}$. 
Applying an operation to a state can be done by matrix-vector multiplication.

\begin{example}
\label{exp:matrxi_vector_mul}
Consider again the 2-qubit register $\ket{\psi}$ from Example~\ref{exp:state_vec_rep}. Applying a CNOT operation to $\ket{\psi}$, which negates the amplitude of the second qubit if the first qubit is set to~$\ket{1}$, is given by
\vspace*{-1mm}
\begin{align*}
\underbrace{\begin{bmatrix}1&0&0&0\\0&1&0&0\\0&0&0&1\\0&0&1&0\end{bmatrix}}_{\textup{CNOT}}  \cdot  \underbrace{\begin{bmatrix}\frac{1}{\sqrt{2}}\\0\\\frac{1}{\sqrt{2}}\\0\end{bmatrix}}_{\mathit{\ket{\psi}}} = \underbrace{\begin{bmatrix}\frac{1}{\sqrt{2}}\\0\\0\\\frac{1}{\sqrt{2}}\end{bmatrix}}_{\mathit{\ket{\psi^\prime}}}.
\end{align*}
\vspace*{-1mm}
Measuring $\ket{\psi^\prime}$ either yields $ \ket{00} $ or $ \ket{11} $, each with probability~${\sfrac{1}{2}}$. Note that the measurement outcome of one qubit affects the other one as well---an essential concept in quantum computing known as \emph{entanglement}.
\end{example}

\subsection{Errors in Quantum Computing} %
\label{subsec:back_errors}
The formalism presented above can be used to describe how perfect quantum computers behave. However, due to the fragile nature of quantum systems, real quantum computers are prone to errors. 
These errors can be classified into two categories~\cite{Tannu2018}: \emph{Gate errors} (also known as operational errors) and \emph{coherence errors} (also known as retention errors).

\subsubsection{Gate Errors} are introduced by executed operations~\cite{Tannu2018}. They occur since quantum computers are mechanical constructions that do not always apply operations perfectly. Instead, the operation may be not executed at all, or in a (slightly) modified fashion. Since gate errors are highly specific for each quantum computer and even vary for qubits within the quantum computer, they are often approximated using depolarization errors~\cite{qxSimulator2017, qiskit}. The depolarization error describes that a qubit is set to a completely random state~\cite{NC:2000}.
For publicly available quantum computers from IBM, the %
error probabilities are in the order of $10^{-3}$ to $10^{-2}$~\cite{ibmErrorRate2019}.

\subsubsection{Coherence Errors} occur due to the fragile nature of quantum systems (qubits). 
In practice, this leads to the problem that they can hold information for a limited time only. There are two types of coherence errors that may appear~\cite{Tannu2018}: 
\begin{itemize}
\item A qubit in a \mbox{high-energy} state ($\ket{1}$) tends to relax into a low energy state ($\ket{0}$). That is, after a certain amount of time, qubits in a quantum system eventually decay to~$\ket{0}$. This error is called \emph{amplitude damping error} or \emph{T1~error}. 

\item In addition to that, when a qubit interacts with the environment, a phase flip effect might occur. This leads to an error called \emph{phase flip error} or \emph{T2~error}.
\end{itemize}

Developments in the physical realization of quantum computers (e.g., in~\cite{Devoret1169, Kelly2018}) show significant improvements in the coherence times %
of qubits---improving the ``lifetime'' of qubits before decaying to~$\ket{0}$ and reducing the frequency of phase flip errors, respectively. Nevertheless, the errors are still a significant aspect in all quantum computations and, hence, should also be considered during simulation.

Error effects can be viewed as (unwanted) operations on the state. 
However, while ideal quantum operations are deterministic, errors can have an additional degree of randomness. For instance, a \mbox{1-qubit} amplitude damping channel may fire (with probability~$p$) or it may do nothing (with probability~$1-p$).
The outcome of an erroneous quantum computation cannot be described as a single state $|\psi \rangle$ anymore. Instead, it is described by an \emph{ensemble} of possible outcomes $\left\{(p_i, |\psi_i \rangle) \right\}$. Here, the states~$|\psi_i \rangle$ label potential outcomes while each weight $p_i$  describes the probability with which outcome $|\psi_i \rangle$ occurs ($p_i \geq 0$ and $\sum_i p_i=1$). 

\begin{example}\label{exp:errors}
Consider again the 2-qubit state \mbox{$\ket{\psi^\prime} = \frac{1}{\sqrt{2}}(\ket{00} + \ket{11})$} from Example~\ref{exp:matrxi_vector_mul}. Suppose that this state might be affected by a gate error in the first qubit only, depolarizing it. 
With probability $1-p$, nothing happens and the state remains unchanged. With probability $p$, the first qubit becomes depolarized. We can capture this effect by either applying \mbox{I}, \mbox{X}, \mbox{Y}, or \mbox{Z}---each with probability $\tfrac{p}{4}$. This produces an ensemble (or mixture)
$
\{(1-p,\tfrac{1}{\sqrt{2}}(\ket{00} + \ket{11}), (\tfrac{p}{4},\tfrac{1}{\sqrt{2}}(\ket{00} + \ket{11}),(\tfrac{p}{4},(\tfrac{1}{\sqrt{2}}(\ket{01} + \ket{10}), (\tfrac{p}{4},(\tfrac{i}{\sqrt{2}}(-\ket{01} + \ket{10}), (\tfrac{p}{4},\tfrac{1}{\sqrt{2}}(\ket{00} - \ket{11})\}
$
which cannot be represented by a single 2-qubit state. %
\end{example}

\section{Stochastic Quantum Circuit Simulation}
\label{sec:stochastic_simulation}

In order to conduct quantum circuit simulation, the concepts described in Section~\ref{sec:background} need to be emulated on a classical machine. Conceptually, this can be conducted in a straightforward fashion: State vectors and operation matrices are first represented in the form of 1-dimensional and 2-dimensional arrays, respectively. Then, the application of quantum states is handled by applying \mbox{matrix-vector} multiplication as illustrated in Example~\ref{exp:matrxi_vector_mul} above. 

However, the problem of this approach is that the representation of both, quantum states and quantum operations require exponentially large vectors and matrices---rendering quantum circuit simulation very complex. Moreover, error effects occur only by chance, i.e., they are randomly applied depending on the hardware model of the simulated quantum computer and cannot simply be considered in a \mbox{pre-defined} fashion (such as operations). Therefore, instead of one possible final state, simulation with errors produces a range of possible output states, depending on the applied error effects (as illustrated in Example~\ref{exp:errors} above).

This phenomenon is well known and may be captured by a rigorous mathematical formalism: quantum channels and mixed states~\cite{NC:2000,watrous_2018}. However, this formalism necessarily amplifies the curse of dimensionality: mixed states correspond to $2^n \times 2^n$ matrices and keeping track of them renders an exponentially large problem even harder.

Stochastic quantum simulation, on the other hand, avoids this further increase in simulation complexity by sacrificing deterministic descriptions. The key idea is to imitate error effects in a real quantum computer. That is, whenever the quantum computer might make an error during its calculation with some probability~$p$, we mimic the effect of this error during the simulation with probability~$p$ and leave the state untouched with probability~$1-p$. By simulating in such a way, we generate \emph{one possible} final state $|\tilde{\psi} \rangle$ that is sampled from the actual ensemble $\left\{ (p_i, |\psi_i \rangle) \right\}$:
$
|\tilde{\psi} \rangle = |\psi_i \rangle$ with probability~$p_i$.
Sampling access opens the door for stochastic \mbox{(Monte-Carlo)} approximation: simply approximate the true distribution by forming empirical averages of sampled output states.

Stochastic approximation is particularly well suited for directly and accurately learning interesting properties of the final state (distribution) without the need of keeping track of the complete distribution. 
In quantum computing, many interesting properties can be described in terms of quadratic functions in the state vector, i.e., $o_l = |\langle \omega_l |\psi \rangle|^2$. Prominent examples are the fidelity with another state, as well as the outcome probability of a computational basis measurement. 
For a probabilistic state mixture $\left\{(p_i,|\psi_i \rangle)\right\}$ such a quadratic property becomes
\begin{equation}
o_l = \sum_i p_i  \left| \langle \omega_l| \psi_i \rangle \right|^2 
\label{eq:quadratic-property}
\end{equation}
and can be approximated by an empirical average over $M$  samples $|\tilde{\psi}_j \rangle$ from this distribution:
\begin{equation*}
\hat{o_l}= \tfrac{1}{M} \sum_{j=1}^M \left|\langle \omega_l| \tilde{\psi}_j \rangle \right|^2 \quad \text{(Monte Carlo)}.
\end{equation*}

Moreover, the same collection of samples $\left\{|\tilde{\psi}_1\rangle,\ldots,|\psi_M \rangle \right\}$ can be used to estimate many quadratic properties at once. 

\begin{theorem} \label{theorem_stoch}
Fix a collection of $L$ (arbitrary) quadratic properties \eqref{eq:quadratic-property}, as well as $\epsilon \in (0,1)$ (accuracy) and $\delta \in (0,1)$ (confidence). 
Then, $M = \log (2L/\delta)/(2\epsilon)^2$ state samples suffice to accurately approximate \emph{all} target properties with high confidence: $ \max_l |\hat{o}_l - o_l| \leq \epsilon$ with probability at least $1-\delta$.
\end{theorem}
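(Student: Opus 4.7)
The plan is to combine a standard concentration inequality (Hoeffding's) with a union bound over the $L$ target properties. The key observation is that for each fixed $l$, the empirical estimator $\hat{o}_l = \tfrac{1}{M}\sum_{j=1}^M |\langle \omega_l|\tilde{\psi}_j\rangle|^2$ is a sample mean of $M$ independent, identically distributed random variables $X_j^{(l)} = |\langle \omega_l|\tilde{\psi}_j\rangle|^2$, each bounded in $[0,1]$ because $|\omega_l\rangle$ and $|\tilde{\psi}_j\rangle$ are unit vectors (so Cauchy--Schwarz gives $|\langle \omega_l|\tilde{\psi}_j\rangle| \leq 1$). By construction (the sampling rule $|\tilde{\psi}\rangle = |\psi_i\rangle$ with probability $p_i$), the expectation of each $X_j^{(l)}$ is exactly the target quantity $o_l = \sum_i p_i |\langle \omega_l|\psi_i\rangle|^2$ from \eqref{eq:quadratic-property}.

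First I would invoke Hoeffding's inequality for bounded i.i.d.\ random variables: for every fixed $l$,
\[
\Pr\!\left[\,|\hat{o}_l - o_l| > \epsilon\,\right] \;\leq\; 2\exp(-2M\epsilon^2).
\]
Next I would take a union bound over the $L$ properties:
\[
\Pr\!\left[\,\max_{l} |\hat{o}_l - o_l| > \epsilon\,\right] \;\leq\; 2L\exp(-2M\epsilon^2).
\]
Finally, requiring the right-hand side to be at most $\delta$ and solving for $M$ yields
\[
M \;\geq\; \frac{\log(2L/\delta)}{2\epsilon^2},
\]
which matches the stated sample complexity (modulo the placement of the factor of $2$ inside the denominator, which I would double-check against the theorem statement).

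There is no real obstacle here: the argument is textbook. The only subtlety worth making explicit is that the randomness driving Hoeffding's inequality is the stochastic simulation randomness (the random choice of $|\tilde{\psi}_j\rangle$ from the ensemble $\{(p_i,|\psi_i\rangle)\}$ at each run), not any quantum randomness, and that the $M$ runs are genuinely independent. Given this, the bound applies uniformly to \emph{all} quadratic observables of the form $o_l = |\langle \omega_l|\psi\rangle|^2$, which is precisely what the theorem asserts, and the logarithmic dependence on $L$ justifies the claim that a single collection of samples estimates many properties simultaneously.
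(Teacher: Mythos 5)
Your proof is correct and follows exactly the same route as the paper's own argument: unbiasedness of a single stochastic run, Hoeffding's inequality for the $[0,1]$-bounded i.i.d.\ variables $|\langle\omega_l|\tilde{\psi}_j\rangle|^2$, and a union bound over the $L$ properties. Your parenthetical worry about the constant is in fact warranted: solving $2L\mathrm{e}^{-2M\epsilon^2}\leq\delta$ gives $M\geq\log(2L/\delta)/(2\epsilon^2)$, whereas the theorem's $M=\log(2L/\delta)/(2\epsilon)^2=\log(2L/\delta)/(4\epsilon^2)$ is smaller by a factor of $2$ and does not quite close the bound as stated, so your version of the constant is the one Hoeffding actually delivers.
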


\begin{proof}
Fix a target property $o_l = \sum_i p_i |\langle \omega_l| \psi_i \rangle |^2$. 
Conducting a single stochastic run yields the correct property in expectation, i.e.,\ 
$\mathbb{E} |\langle \omega_l |\tilde{\psi}_j \rangle |^2
= o_l$.
Standard concentration inequalities, like Hoeffding, imply \mbox{$\mathrm{Pr} \left[ \left|o_l - \hat{o}_l \right| \geq \epsilon \right] \leq 2 \mathrm{e}^{-2M \epsilon^2}$}.
The claim follows from taking a union bound over all $L$ target approximations and inserting the advertised value of $M$.
\end{proof}
\vspace*{-1mm}
As is typical of Monte Carlo, the required number of samples~$M$ scales inverse quadratically in the desired accuracy~$\epsilon$. %
More importantly and interestingly, $M$ only depends logarithmically on the number $L$ of target properties and is independent of the actual system size. 
 This logarithmic suppression can help to combat the curse of dimensionality. For instance, only roughly $n/\epsilon^2$ samples suffice to $\epsilon$-approximate all $N=2^n$ outcome probabilities of the underlying state distribution.

Overall, stochastic quantum circuit simulation allows to avoid the increase of complexity from 
$2^n$-vectors to \mbox{$2^n\times 2^n$- matrices}. However, the challenge remains to produce and process samples~$|\tilde{\psi}_i \rangle$ (which still remain exponential in size). State-of-the-art quantum circuit simulators like~\cite{atos2016,qiskit,forest} still severely suffer from the remaining exponential complexity.

\section{Proposed Solution}\label{sec:prob_soluation}

In this section, we present a solution that addresses the problems that still exists in stochastic quantum circuit simulation. To this end, we first briefly introduce the main ideas of our solution; followed by more detailed descriptions afterwards. Section~\ref{sec:evaluation} eventually shows that the concepts introduced here have a substantial impact on the performance and the scalability of stochastic quantum circuit simulation.
\subsection{General Ideas}

Stochastic quantum circuit simulation suffers from the fact that (1)~the underlying concepts require exponentially large representations of vectors and matrices and (2)~that, in order to determine accurate predictions (see Theorem~\ref{theorem_stoch}), a sufficient number of simulation runs (with these exponential representations) need to be conducted---posing severe challenges with respect to memory and runtime. In this work, we are addressing these challenges with the following two key ideas:
\begin{itemize}

\item \emph{Use Decision Diagrams for Individual Simulation Runs}: In the conventional realm, decision diagrams (such as proposed, e.g., in~\cite{Bry:86,Min:93,DST+:94b}) 
have found great application to tackle several (exponentially hard) problems. Even in the quantum realm, first approaches successfully exploiting them have been reported (see, e.g.,~\cite{VRMH:2003,abdollahi2006analysis,WLTK:2008,DBLP:journals/tcad/NiemannWMTD16,DBLP:conf/date/ZulehnerW19,DBLP:journals/tcad/ZulehnerW19}). 
We propose to use these promises to tackle the challenge of exponential complexity in individual simulation runs.

\item \emph{Exploit Concurrency Across Different Simulation Runs}: 
In stochastic quantum simulation, interesting properties are obtained by empirically averaging over a sufficient number of \emph{independent} simulation runs. 
This setup facilitates concurrent implementation: use different cores for computing different simulation runs. Since accurate predictions are contingent on empirically
averaging many samples (see Theorem~\ref{theorem_stoch}), the potential of
concurrent protocol execution on multi-core architectures
is enormous.
\end{itemize}

These two ideas turn out to complement each other nicely. 
Concurrency is a default feature of the proposed high-level solution (Monte Carlo), while decision diagrams provide a tractable way for executing individual simulation runs.
In the remainder of this section, details of these ideas are described and illustrated.

\subsection{Using Decision Diagrams for Individual Simulation Runs}
The general idea of decision \mbox{diagram-based} quantum circuit simulation is about uncovering and exploiting redundancies in the representation of states and operations. Doing so can result in potentially very compact representations, which in turn allows simulating quantum circuits that cannot be tackled using other simulation approaches anymore. 

Representing a state vector as a decision diagram revolves around recursively splitting the vector into equally sized \mbox{sub-vectors}, until the \mbox{sub-vectors} only contain a single element. More precisely, consider a quantum register $q_0, q_1,\dots,q_{n-1}$ composed of $n$ qubits, where $q_0$ represents the most significant qubit. The first $2^{n-1}$ entries of the corresponding state vector would then represent amplitudes for basis states where $q_0$ is $\ket{0}$, while the remaining $2^{n-1}$ entries would represent amplitudes where $q_0$ is~$\ket{1}$. This is represented in a decision diagram by a node labeled~$q_0$ with a left (right) successor which points to a node that represents the sub-vector with amplitudes for basis states with $q_0$ assigned $\ket{0}$~($\ket{1}$). This process is repeated recursively until \mbox{sub-vectors} of size 1 (i.e., complex numbers) result. 

During this process, equivalent \mbox{sub-vectors} are represented by the same node---reducing the overall size of the decision diagram. Furthermore, instead of having distinct terminal nodes for all amplitudes, edge weights are used to store common factors of the amplitudes---leading to even more compaction. Reconstructing the amplitude of a specific state can be done by multiplying the edge weights along the corresponding path.

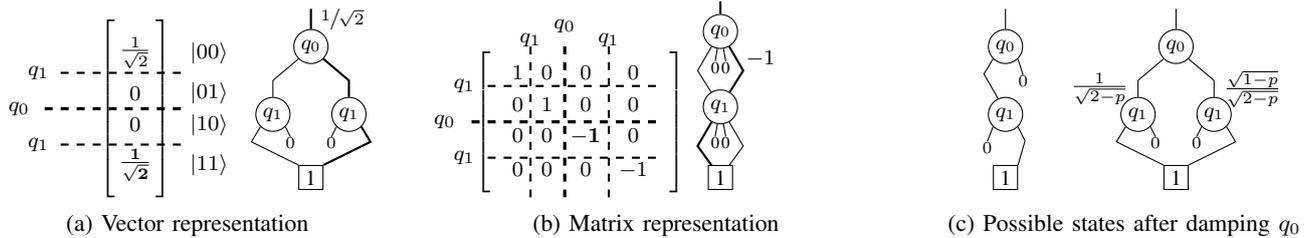
\begin{figure*}[t]
	\begin{subfigure}[b]{0.33\linewidth}
		\centering
		\begin{tikzpicture}
		\matrix[matrix of math nodes, left delimiter={[},right delimiter={]}, inner xsep=0] (vector) {
			\frac{1}{\sqrt{2}}\\				
			0\\
			0\\
			\mathbf{\frac{1}{\sqrt{2}}}\\				
		};			
		\begin{scope}[on background layer, black]	
		\node[right=0.6cm of vector-1-1.center] {$\ket{00}$};
		\node[right=0.6cm of vector-2-1.center] {$\ket{01}$};
		\node[right=0.6cm of vector-3-1.center] {$\ket{10}$};
		\node[right=0.6cm of vector-4-1.center] {$\ket{11}$};
		
		\draw[black,-,dashed, very thick,shorten <= -0.6cm] ($(vector-2-1)!0.5!(vector-3-1)$) -- ++(-1.25,0) node[anchor=east] {\(q_0\)};
		
		\draw[black,-,dashed, thick,shorten <= -0.6cm] ($(vector-1-1)!0.5!(vector-2-1)$) -- ++(-1,0) node[anchor=east] {\(q_1\)};
		\draw[black,-,dashed, thick,shorten <= -0.6cm] ($(vector-3-1)!0.5!(vector-4-1)$) -- ++(-1,0) node[anchor=east] {\(q_1\)};
		
		\end{scope}
		\end{tikzpicture}
		\begin{tikzpicture}	
		\matrix[matrix of nodes,ampersand replacement=\&,every node/.style=vertex,column sep={0.5cm,between origins},row sep={0.9cm,between origins}] (qmdd2) {
			\& \node (m1) {$q_0$}; \& \\
			\node (m2a) {$q_1$}; \& \& \node (m2b) {$q_1$}; \\
			\& \node[terminal] (t3) {1}; \& \\
		};
		
		\draw[thick] ($(m1)+(0,0.5cm)$) -- (m1) node[right, midway]{$\sfrac{1}{\sqrt{2}}$};
		
		\draw (m1.-135) -- ($(m1)!0.5!($(m2a)!0.5!(m2b)$) + (-5mm,0)$) -- (m2a.90);
		\draw[thick] (m1.-45) -- ($(m1)!0.5!($(m2a)!0.5!(m2b)$) + (5mm,0)$) -- (m2b.90);
		
		\draw (m2a.-135) -- ($(m2a.-135) - (1.2mm,2.0mm)$) -- (t3.135);
		\draw (m2a.-45) -- ($(m2a.-45)!0.2!(t3) + (0mm,0)$)  node[zeroterm] {$0$};;
		
		\draw (m2b.-135) -- ($(m2b.-135)!0.2!(t3) + (0mm,0)$)  node[zeroterm] {$0$};;
		\draw[thick] (m2b.-45) -- ($(m2b.-45) - (-1.2mm,2.0mm)$) -- (t3.45);		
		\end{tikzpicture}
		\caption{Vector representation}
		\label{fig:statevectordd}
	    \end{subfigure}\hfill
	    \begin{subfigure}[b]{0.33\linewidth}
		\begin{tikzpicture}
		\matrix[matrix of math nodes, left delimiter={[},right delimiter={]}] (matrix) {
			1 & 0 & 0 & 0\\				
			0 & 1 & 0 & 0\\		
			0 & 0 & \mathbf{-1} & 0\\		
			0 & 0 & 0 & -1\\					
		};			
		\begin{scope}[on background layer, black]	
		
		\draw [black,-,dashed, thick,shorten <= 0cm] (0.38,1) -- (0.38,-1)  ++(0,2.3) node[anchor=north] {\(q_1\)};
\draw [black,-,dashed, very thick,shorten <= 0cm] (-0.20,1) -- (-0.20,-1)  ++(0,2.5) node[anchor=north] {\(q_0\)};			
		
		\draw [black,-,dashed, thick,shorten <= 0 cm] (-0.65,1) -- (-0.65,-1)  ++(0,2.3) node[anchor=north] {\(q_1\)};

		\draw[black,-,dashed, very thick,shorten <= -1cm] ($(vector-2-1)!0.5!(vector-3-1)$) -- ++(-1.50,0) node[anchor=east] {\(q_0\)};
		
		\draw[black,-,dashed, thick,shorten <= -1cm] ($(vector-1-1)!0.5!(vector-2-1)$) -- ++(-1.25,0) node[anchor=east] {\(q_1\)};
		\draw[black,-,dashed, thick,shorten <= -1cm] ($(vector-3-1)!0.5!(vector-4-1)$) -- ++(-1.25,0) node[anchor=east] {\(q_1\)};		
		\end{scope}
		\end{tikzpicture}
		\begin{tikzpicture}
		\matrix[matrix of nodes,ampersand replacement=\&,every node/.style={vertex},column sep={1cm,between origins},row sep={1cm,between origins}] (qmdd) {
			\node (n1) {$q_0$}; \\
			\node (n2) {$q_1$}; \\
			\node[terminal, outer sep=0pt] (t){1};\\
		};
		\draw (n1) -- ++(240:0.6cm) -- (n2);
		\draw (n1) -- ++(260:0.4cm) node[zeroterm]{0};
		\draw (n1) -- ++(280:0.4cm) node[zeroterm]{0};
		\draw (n1)[thick] -- ++(300:0.6cm) node[right,midway] {$-1$} -- (n2);
		
		\draw (n2)[thick] -- ++(240:0.6cm) -- (t);
		\draw (n2) -- ++(260:0.4cm) node[zeroterm]{0};
		\draw (n2) -- ++(280:0.4cm) node[zeroterm]{0};
		\draw (n2) -- ++(300:0.6cm) -- (t);
		\draw[thick] ($(n1)+(0,0.4cm)$) -- (n1);
		\end{tikzpicture}
		\caption{Matrix representation}
		\label{fig:operationmatrix}
	    \end{subfigure}\hfill
		\begin{subfigure}[b]{0.33\linewidth}
		\centering
		\begin{tikzpicture}	
		\matrix[matrix of nodes,ampersand replacement=\&,every node/.style=vertex,column sep={0.5cm,between origins},row sep={0.9cm,between origins}] (qmdd2) {
			\& \node (m1) {$q_0$}; \& \\
			\& \node (m2a) {$q_1$}; \& \\
			\& \node[terminal] (t3) {1}; \& \\
		};
		
		\draw ($(m1)+(0,0.5cm)$) -- (m1);
		
		\draw (m1.-135)  -- ($(m1.-135) - (1.2mm,2.0mm)$) -- (m2a.115);
		\draw (m1.-45) -- ($(m1.-10)!0.2!(t3) + (0.5mm,0)$)  node[zeroterm] {$0$};;
		
		\draw (m2a.-135) -- ($(m2a.-170)!0.3!(t3) + (-0.8mm,-0.8mm)$)  node[zeroterm] {$0$};;
		\draw (m2a.-45) -- ($(m2a.-25)!0.3!(t3) + (1.1mm,-0.3mm)$) -- (t3.45);		
		\end{tikzpicture}			
		\begin{tikzpicture}	
		\matrix[matrix of nodes,ampersand replacement=\&,every node/.style=vertex,column sep={0.5cm,between origins},row sep={0.9cm,between origins}] (qmdd2) {
			\& \node (m1) {$q_0$}; \& \\
			\node (m2a) {$q_1$}; \& \& \node (m2b) {$q_1$}; \\
			\& \node[terminal] (t3) {1}; \& \\
		};
		
		\draw ($(m1)+(0,0.5cm)$) -- (m1);
		
		\draw (m1.-135) -- ($(m1)!0.5!($(m2a)!0.5!(m2b)$) + (-5mm,0)$) -- (m2a.90) node[left, midway]{$\frac{1}{\sqrt{2-p}}$};
		\draw (m1.-45) -- ($(m1)!0.5!($(m2a)!0.5!(m2b)$) + (5mm,0)$) -- (m2b.90)node[right, midway]{$\frac{\sqrt{1-p}}{\sqrt{2-p}}$};
		
		\draw (m2a.-135) -- ($(m2a.-135) - (1.2mm,2.0mm)$) -- (t3.135);
		\draw (m2a.-45) -- ($(m2a.-45)!0.2!(t3) + (0mm,0)$)  node[zeroterm] {$0$};;
		
		\draw (m2b.-135) -- ($(m2b.-135)!0.2!(t3) + (0mm,0)$)  node[zeroterm] {$0$};;
		\draw (m2b.-45) -- ($(m2b.-45) - (-1.2mm,2.0mm)$) -- (t3.45);		
		\end{tikzpicture}		
		\caption{Possible states after damping $q_0$}
		\label{fig:applyDepolToDD}
	\end{subfigure}	
	\caption{Decision diagram representation of states}
	\label{fig:dd_rep_examples}
\end{figure*}

\begin{example}
\label{exp:vectorToDD}
In Fig.~\ref{fig:statevectordd}, the state vector $\ket{\psi^\prime}$ from Example~\ref{exp:matrxi_vector_mul} is represented as both, vector and decision diagram\footnote{In order to aid the readability of the decision diagram, edge weights of~1 are omitted. Additionally, nodes with an incoming edge weight of 0 are represented as 0-stubs---indicating that amplitudes of all possible states represented by this part of the decision diagram are zero.}. The annotations of the vector representation indicate how it is decomposed for the decision diagram representation.
In order to reconstruct an amplitude from the decision diagram, the edge weights of the corresponding path must be multiplied. For example, the amplitude of the state $\ket{11}$ (represented by the bold path in Fig.~\ref{fig:statevectordd}) can be reconstructed by multiplying the edge weights of the root edge (${\frac{1}{\sqrt{2}}}$) with the right edge of~$q_0$ ($1$), as well as the right edge of $q_1$ ($1$), i.e., ${\frac{1}{\sqrt{2}}} \cdot 1 \cdot 1 = {\frac{1}{\sqrt{2}}}$.
\end{example}

Matrix representations of quantum operations are represented as decision diagrams in a similar way. However, due to the square nature of matrices, they are split into four equally sized sub-parts. These parts are represented in a decision diagram by a node with four successor edges. The first one representing the upper left, the second the upper right, the third the lower left, and the fourth the lower right \mbox{sub-matrix}. The remaining decomposition steps are analogous to the case of a vector described above. 

\begin{example}
Fig.~\ref{fig:operationmatrix} provides a decision diagram representation for a Z-operation applied to the first qubit of a 2-qubit register. The annotations in the matrix representation indicate how the matrix is decomposed into the resulting decision diagram. The matrix entry highlighted bold in Fig.~\ref{fig:operationmatrix} can be reconstructed by multiplying the edge weights of the root edge $1$ with the first edge of $q_0$ ($-1$), as well as the first edge of $q_1$ ($1$), i.e., ${1 \cdot -1 \cdot 1 = -1}$. 
\end{example}

Using these representations, operations such as \mbox{matrix-vector} multiplication can be executed so that, simulation of quantum circuits can be conducted. However, similar to the vector and matrix representation, the multiplication must also be decomposed with respect to the most significant qubit. 

More precisely, consider a quantum register composed of $n$ qubits given by $\ket{\phi} = q_0, q_1, \ldots, q_{n-1}$ (where $q_0$ represents the most significant qubit) and a unitary quantum operation $U$ of size $2^n \times 2^n$. To multiply the operation $U$ onto the state~$\ket{\phi}$, they are split into two (in the case of the state vector) and four (in the case of the operation) equally sized parts. This leads to two sub-vectors of size $2^{n-1}$ and four sub-matrices of size $2^{n-1} \times 2^{n-1}$. This represents the modifications of $U$ onto~$q_0$ and is accordingly represented by a top node labeled~$q_0$, with two successor nodes. Similar to the vector and matrix decomposition, this process is recursively repeated until vectors of size~$2$ and matrices of size $2 \times 2$ remain, which are multiplied. From the resulting new amplitudes, the new edge weights are calculated and equivalent sub-vectors are represented by the same node.
Multiplications %
therefore mainly involves recursive traversals of the involved decision diagrams.

Finally, in order to consider error effects, we basically can re-use the concepts from above, i.e., we view error effects as operations, which are applied to the state with some probability. The outcome of such an erroneous operation is an \emph{ensemble} of possible outcomes $\left\{(p_i, |\psi_i \rangle) \right\}$. Gate errors causing depolarization of qubits can be mimicked as illustrated in Example~\ref{exp:errors}. Phase flip decoherence errors can be mimicked in a similar fashion, by applying a Z-operation to the qubit. 
The amplitude damping error cannot be simulated using simple gate operations. This is due to the fact that damping a qubit is not reversible, which is why it cannot be expressed using reversible (unitary) operations. This makes it so, that the probability of applying the error is influenced by the state the error is applied to, as illustrated in the following example.
\begin{example}
\label{exp:vec_amp_damping}
Consider the state  $\ket{\psi^\prime} = \frac{1}{\sqrt{2}}(\ket{00} + \ket{11})$ from Example~\ref{exp:matrxi_vector_mul} (shown in Fig~\ref{fig:statevectordd}) and suppose that it is subject to amplitude damping. Here, things get more interesting. If amplitude damping affects the first qubit with probability $p$, its action is given by the matrices $\textup{A}_{0}=\begin{bNiceMatrix}[r][small]0&\sqrt{p}\\0&0\end{bNiceMatrix}$ and $\textup{A}_{1}=\begin{bNiceMatrix}[r][small]1&0\\0&\sqrt{1-p}\end{bNiceMatrix}$~\cite{NC:2000}. But, it is not the error probability $p$ (alone) that matters. In contrast to depolarizing and phase flip errors, amplitude damping is manifestly state-dependent. 
In order to get the probability for applying either $\textup{A}_{0}$ or $\textup{A}_{1}$, they have to be applied to $\ket{\psi^\prime}$. 
Applying $\textup{A}_{0}$ to $\ket{\psi^\prime}$ results in a state vector whose squared norm is $\tfrac{p}{2}$, which is also the probability that $\textup{A}_{0}$ is applied. Analogously, the probability for applying $\textup{A}_{1}$ is $1-\tfrac{p}{2}$. Depending on those probabilities, one state is randomly chosen and normalized, while the other one is discarded.
Thus damping the first qubit with probability $p$ results in the ensemble $\{(\tfrac{p}{2},\ket{01}), (1-\tfrac{p}{2}, \frac{1}{\sqrt{2-p}}\ket{00} + \frac{\sqrt{1-p}}{\sqrt{2-p}}\ket{11})\}$ (their decision diagram representations are also given in Fig.~\ref{fig:applyDepolToDD}).
\end{example}
 
Using all that, the concepts for stochastic quantum circuit simulation as reviewed in Section~\ref{sec:stochastic_simulation} can be realized by means of decision diagrams.
More precisely, recall from Section~\ref{sec:background} that for simulating quantum circuits we need means to represent vectors and matrices for states and operations, respectively. Additionally, we need some means of applying operations to states (either for applying quantum operations or error effects). Having all that, the stochastic approach presented in Section~\ref{sec:stochastic_simulation}---which allows to apply error operations probabilistically---can be used in a straightforward fashion. Since decision diagrams often allow to represent all these entities and to conduct all these operations in a much more compact and efficient fashion, a big challenge of existing approaches for stochastic quantum circuit simulation is addressed.

\subsection{Exploiting Concurrency Across Different Simulation Runs}

As detailed above, decision diagrams provide a powerful data structure that often helps to escape exponential memory requirements. This facilitates the faithful execution of \mbox{moderate-scale} quantum simulation. At the same time, however, decision diagrams can hardly exploit concurrency thus far~\cite{Hillmich2020}. 
This is in stark contrast to state-of-the art quantum simulators (such as~\cite{forest,qiskit,atos2016,qxSimulator2017,DBLP:journals/corr/WeckerS14,CirqPythonFramework,jones2018quest,DBLP:journals/corr/SmelyanskiySA16,villalonga2019highperformancesimulator,Steiger2018projectqopensource}) which heavily make use of concurrent executions (e.g., during matrix-vector multiplication). %

Stochastic quantum simulation, however, is an interesting use case where the apparent trade-off between optimizing memory (through decision diagrams) and exploiting concurrency (to accelerate matrix-vector multiplication) %
 can be resolved by different means:
Simply use different cores to compute \emph{independent} simulation runs (samples). 
Given that accurate predictions require a sufficient number of independent samples (see Theorem~\ref{theorem_stoch}),
the potential of exploiting concurrency across different simulation runs---rather than within individual runs---is enormous.

It is worthwhile to point out that concurrent execution is a general  feature of Monte-Carlo-type approximations and well known. Stochastic quantum simulation is merely an interesting special case.
Implementations of 
stochastic quantum simulations, e.g., in~\cite{atos2016,qiskit,forest}, do not seem to utilize this potential yet (most likely, because most existing approaches rely on exponentially large vector and matrix representations limiting the potential of having several runs of this size in parallel and, hence, exploiting concurrency during the matrix-vector multiplications seemed to be the more feasible approach). With the proposed approach, both (memory-efficient representations \emph{and} concurrent executions) can be exploited.

\section{Evaluation}
\label{sec:evaluation}
In order to empirically evaluate the performance of the proposed stochastic error simulation approach, we implemented the concepts described above in C++ (using the open-source decision diagram package taken from~\cite{zulehner2019package}).
Afterwards, we compared the resulting performance against other available \mbox{state-of-the-art} stochastic simulators, namely the \emph{LinAlg} simulator from the \emph{Atos Quantum Learning Machine} (QLM)~\cite{atos2016} and the \emph{statevector} simulator from IBM's \emph{Qiskit}~\cite{qiskit}.

We considered different benchmark sets: First, we evaluated all simulation approaches using the \emph{Entanglement} circuit (an algorithm generating the GHZ state), as well as the \emph{Quantum Fourier Transform} (QFT,~\cite{NC:2000}), with an increasing number of qubits.
By this, we considered typical use cases incorporating quantum-mechanical effects such as superposition and entanglement in a scalable fashion (i.e., with an increasing number of qubits).
Second, we evaluated all simulation approaches using the circuits from the benchmark suite \emph{QASMBench} (taken from~\cite{Li2020}), which contains a broad range of different quantum algorithms. %

For all benchmarks, we considered all errors discussed in Section~\ref{subsec:back_errors}, i.e., gate errors, as well as decoherence errors. More precisely, we applied a depolarization error with 0.1~\% probability, an amplitude damping (T1) error with 0.2~\% probability, and a phase flip error (T2) with 0.1~\% probability to the gate/qubit. The errors have been simulated using the stochastic approach presented in Section~\ref{sec:stochastic_simulation} with a total of \mbox{$M=30,000$} iterations for each benchmark (using Theorem~\ref{theorem_stoch}, this corresponds to tracking 1000 properties with an error margin of $<0.01$ and a confidence of 95~\%).

\begin{table*}[ht]
	\caption{Evaluation results}
	\label{tab:results}
	\centering	
	\begin{subfigure}[b]{0.30\linewidth}
		\centering
		\caption{Entanglement circuits}
		\label{tab:results_ent}
		\begin{tabular}{r|r|r|r}
		$n$ & Qiskit [s] & QLM [s] & Proposed [s] \\\hline\hline		
		\csvreader[
		late after line=\\,
		]{csv/results_ent.csv}	
		{1=\qubitent, 2=\qisent, 3=\qlment, 4=\jkuent}
		{\qubitent & \qisent & \qlment & \jkuent}		
		\end{tabular}		
	\end{subfigure}	\hfill
		\begin{subfigure}[b]{0.30\linewidth}
		\centering
		\caption{QFT circuits}
		\label{tab:results_qft}
		\begin{tabular}{r|r|r|r}
		$n$ & Qiskit [s] & QLM [s] & Proposed [s] \\\hline\hline		
		\csvreader[
		late after line=\\,
		]{csv/results_qft.csv}	
		{1=\qubitqft, 2=\qisqft, 3=\qlmqft, 4=\jkuqft}
		{\qubitqft & \qisqft & \qlmqft & \jkuqft}		
		\end{tabular}		
	\end{subfigure}	\hfill
	\begin{subfigure}[b]{0.36\linewidth}
		\centering
		\caption{QASMBench circuits}
		\label{tab:results_qasm}
		\begin{tabular}{l|r|r|r}
		Name & $n$ & Qiskit [s] & Proposed [s] \\\hline\hline		
		\csvreader[
		late after line=\\,
		]{csv/results_qasm.csv}	
		{1=\name, 2=\qubitother, 3=\qisother, 4=\jkuother}
		{\name & \qubitother & \qisother & \jkuother}		
		\end{tabular}		
	\end{subfigure}
\end{table*}

Table~\ref{tab:results} summarizes the results of our evaluation. 
More precisely, Table~\ref{tab:results_ent}, Table~\ref{tab:results_qft}, and Table~\ref{tab:results_qasm} provide the results for the entanglement benchmark, the QFT benchmark, and the QASMBench benchmarks, respectively.
In each table, we provide the number~$n$ of qubits, as well as the required runtime for each simulation approach in seconds. %
Note that, due to space limitations, only a selection of the QASMBench benchmarks is explicitly listed. Out of the 53 benchmarks, 4 have been omitted as they could not be simulated by either simulation approach within the time limit of 1~hour. 39 benchmarks have been omitted because their differences in runtime between the simulation approaches remained rather small.  %
Furthermore, note that we do not list any results from Atos' QLM simulator for the QASMBench benchmarks, since those circuits are only provided in the OpenQASM format, which is not supported by the QLM simulator. 

The results %
show the improved performance of the proposed simulation approach compared to the \mbox{state-of-the-art} approaches by Atos and IBM for certain algorithms. For the entanglement and QFT benchmarks, a substantially better scalability with respect to the number of qubits can be reported. For the QASMBench benchmarks, the proposed solution reaches its limit and gives longer runtimes for the \emph{ising}, \emph{yqe\_uccsd}, and \emph{cc} circuits. In the other cases, the circuits could be simulated faster---some algorithms even by several orders of magnitude.
\section{Conclusion}
\label{sec:conclusion}

Quantum circuit simulation is an important research area. However, many available quantum circuit simulators simplify the problem by simulating \emph{perfect} quantum computers. Due to the fragile nature of quantum systems, quantum computers are always subject to errors during their calculations. Simulators which allow the consideration of errors during the simulation often suffer from the exponential complexity of vectors and matrices required for the simulation.
We addressed this issue by implementing a stochastic circuit simulator, which exploits the compact representations of vectors and matrices offered by decision diagrams and utilizes concurrent executions for an efficient generation of samples. 
Evaluations and comparisons against state-of-the-art simulators by IBM and Atos show the improved scalability and efficiency of the proposed solution for certain applications.

\section*{Acknowledgments}
This work has partially been supported by the University of Applied Sciences PhD program of the State of Upper Austria (managed by the FFG), by the LIT Secure and Correct Systems Lab funded by the State of Upper Austria, as well as by the BMK, BMDW, and the State of Upper Austria in the frame of the COMET program (managed by the FFG).

\bibliographystyle{IEEEtran}
\let\oldbibliography\thebibliography
\renewcommand{\thebibliography}[1]{%
  \oldbibliography{#1}%
  \setlength{\itemsep}{0pt plus 3pt}%
}
\bibliography{../../bib/lit_header, IEEEabrv,../../bib/lit_quantum, ../../bib/lit_myrev,../../bib/lit_misc,../../bib/lit_mymisc,../../bib/lit_others,../../bib/lit_othersrev,../../bib/lit_rev,../../bib/lit_adiabatic,../../bib/lit_memristor,../new-references} 
\end{document}